\edef\ordinarycolon{\mathchar\the\mathcode`: }
\edef\ordinaryequals{\mathchar\the\mathcode`= }
\let\cat@comma@active\@empty
\newif\ifnotes
\newcommand{\ba}{\begin{eqnarray}}
\newcommand{\ea}{\end{eqnarray}}
\newcommand{\eq}[1]{\begin{align}#1\end{align}}
\newcommand{\GG}{{{\Sigma}}}
\newcommand{\R}{{\mathbb{R}}}
\newcommand{\dSt}{{\langle\dot{\sigma}(t)\rangle}}
\newcommand{\dSot}{{\langle\dot{\sigma}^\omega(t)\rangle}}
\newcommand{\ooo}{{\omega}}
\newcommand{\dSAt}{{\langle \dot{\sigma}_{K(A;t)}(t) \rangle}}
\newcommand{\NN}{\mathcal{N}}
\newcommand{\ovNN}{{\overline{\NN}}}
\newcommand{\oo}{{\omega}}
\newtheorem{theorem}{Theorem}
\newtheorem{lemma}[theorem]{Lemma}
\begin{document}

%\title{Minimum entropy production in multipartite processes due to constraints on rate matrix dependencies}
%\title{The thermodynamic effects of constraints on rate matrix dependencies in multipartite processes}
\title{Minimal entropy production due to constraints on rate matrix dependencies in multipartite processes}

\author{David H. Wolpert}
\affiliation{Santa Fe Institute, Santa Fe, New Mexico \\
%MIT, Cambridge, Massachusetts \\
Complexity Science Hub, Vienna\\
Arizona State University, Tempe, Arizona\\
\texttt{http://davidwolpert.weebly.com}}

\begin{abstract}
I consider multipartite processes in which there are constraints on each subsystem's
rate matrix, restricting which other subsystems can directly affect its dynamics. 
I derive a strictly nonzero lower bound on the minimal achievable entropy production rate of the process
in terms of these constraints on the rate matrices of its subsystems.
The bound is based on constructing counterfactual rate matrices, in which some
subsystems are held fixed while the others are
allowed to evolve. This bound is related to the ``learning rate'' of stationary bipartite 
systems, and more generally to the ``information flow'' in bipartite systems.

\end{abstract}

\maketitle

%\dhwc{Change the notation for the in-ex sum to be what I use in the fluctuation relations paper, both in the main
%text (where the notation now is of the form $\GG^{f}$) and the SI (where it instead has the form $\Gamma_f$).
%
%Rearrange the appendices so they appear in the same order as the associated citations in the main tex.}

\textit{Introduction.---}   
Many systems are naturally modeled as having  two or more interacting subsystems. 
%For example, a biological cell is naturally modeled as a composite system, composed of 
%many separate organelles and biomolecule species. As another example, digital devices are naturally modeled as a set of separate, interacting logical gates.
%
%Reflecting this, r
Recent research in stochastic
thermodynamics~\cite{seifert2012stochastic,van2015ensemble,esposito2010three,wolpert_thermo_comp_review_2019} 
has started to investigate 
such composite systems~\cite{sagawa2008second,sagawa2009minimal,parrondo2015thermodynamics,horowitz2014thermodynamics,barato_efficiency_2014,ito2013information,hartich_sensory_2016}.
So far, most of the  research has been on the special case of  bipartite processes, i.e.,
systems composed of two co-evolving subsystems, which have zero probability of making a state transition 
simultaneously~\cite{sagawa2008second,sagawa2009minimal,parrondo2015thermodynamics,horowitz2014thermodynamics,hartich_sensory_2016,barato_efficiency_2014,ito2013information,fluct.theorems.partially.masked.shiraishi.sagawa.2015,Bisker_2017,shiraishi_ito_sagawa_thermo_of_time_separation.2015}.  However, 
given that many systems have more than just two interacting subsystems,
research is starting to extend to fully multipartite processes~\cite{horowitz_multipartite_2015,ito2013information,wolpert_book_2018}.

The definition of any composite system specifies which subsystems directly affect the dynamics of which other
subsystems. 
%In other words,  it restricts the possible dynamic coupling among the subsystems.
%This specification can be represented as a restriction on the form of the rate matrix governing the composite system's overall dynamics. 
It is now known that just by itself,
%that \textit{in and of itself}, 
such a specification of which subsystem affects which other one
% the dynamic coupling among
%the subsystems 
can cause a strictly positive lower bound on the
entropy production rate (EP) of the overall composite system~\cite{wolpert_thermo_comp_review_2019,wolpert2020thermodynamics,Boyd:2018aa}.
%then the generalized Landauer bound cannot be achieved. 
%The minimal extra integrated EF over and beyond the generalized
%Landauer's bound
%that arises due to such constraints can also be viewed as 
This minimal EP 
%due solely to the coupling network relating the subsystems 
has sometimes been called ``Landauer
 loss'', because it is the extra EP beyond the minimal amount (namely, zero) implicit
in the Landauer bound~\cite{wolpert_thermo_comp_review_2019,wolpert2020thermodynamics,wolpert_book_review_chap_2019}
%, arising solely from a specification of which subsystem affects which other one.

Previous analyses of Landauer loss 
focused on scenarios where every subsystem evolves in isolation, without \textit{any} direct coupling to the other subsystems. 
This is a severe limitation of those analyses. As an illustration,
consider a composite system with three subsystems $A, B$ and $C$.
$B$ evolves independently of $A$ and $C$. However, $B$ is continually observed by $C$ 
 as well as $A$. Moreover, suppose that $A$ is really two subsystems, $1$ and $2$. Only subsystem $2$ 
directly observes $B$, whereas subsystem $1$ observes subsystem $2$, e.g., to record a running average of the
values of subsystem $2$ (see~\cref{fig:1}).  

%Previous research on Landauer loss does not apply to such a composite system.
%In fact, very little is known about \textit{any} of the thermodynamic characteristics of such systems.
There has been some work on a simplified version of this
scenario, in which subsystem $4$ is absent
and subsystem $3$ is required to be at equilibrium~\cite{hartich_sensory_2016,bo2015thermodynamic}. But this work has focused on
issues other than the minimal EP.
%focusing on other issues.
% arising
%due to the specification of which subsystem directly affects which other one. 
%Nor
%does any of the previous research on Landauer loss does not apply to such a composite system.

\begin{figure}[tbp]
\includegraphics[width=75mm]{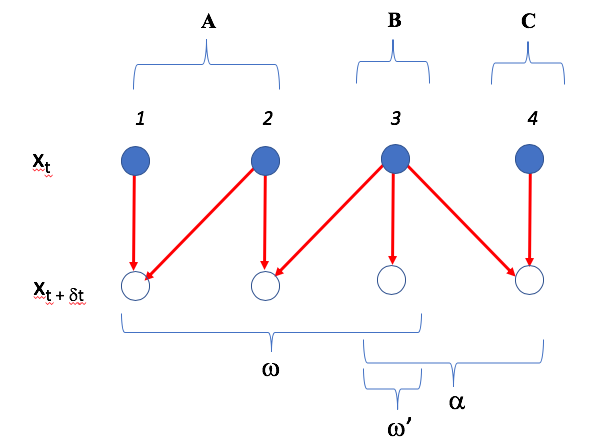}
\caption{Four subsystems, $\{1, 2, 3, 4\}$ interacting in a multipartite process.
The red arrows indicate dependencies in the associated four rate matrices. 
$B$ evolves autonomously, but is continually observed by $A$ and $C$. So the statistical coupling
between $A$ and $C$ could grow with time, even though their rate matrices do not involve one
another. The three overlapping sets indicated at the bottom of the figure specify the three communities of a community structure
for this process. }
\label{fig:1}
\end{figure}

To investigate Landauer loss in these kinds of composite systems, here
I model them as multipartite processes, in which each subsystem evolves according to its own rate
matrix~\cite{horowitz_multipartite_2015}. 
So restrictions on the direct coupling of any subsystem $i$ to the other subsystems are modeled as restrictions
on the rate matrix of subsystem $i$, to only involve a limited
set of other subsystems, called the ``community'' of $i$. (These are instead called ``neighborhoods'' in~\cite{horowitz_multipartite_2015}, 
but that expression already means something in topology, and so I don't use it here.)

%I refer to the \textit{Landauer loss rate} of any multipartite process as the minimal EP that must be incurred simply
%due to the dependency graph of the rate matrices of its subsystems (depicted by arrows in \cref{fig:1}).
In this paper I derive a lower bound
%two lower bounds
on the Landauer loss rate of composite systems, by deriving an exact equation for that minimal EP rate as a sum of non-negative expressions.
%These quantities are 
%%The first bound is 
%defined in terms of counterfactual rate matrices, which differ from the original rate matrices, but
%respect the same dependency graph as the original rate matrices. 
%%Specifically, in those counterfactual rate matrices, all
%%subsystems outside of a particular community are held fixed (i.e., the actual rate matrices of those subsystems
%%are replaced by rate matrices that don't allow their states to change) while the subsystems inside the community 
%%evolve as usual. This bound involves quantities which 
One of those expressions is related to quantities that were earlier considered in the literature. It reduces to what has been called the ``learning rate'' in the special case of stationary bipartite 
systems~\cite{barato_efficiency_2014,Brittain_2017,hartich_sensory_2016}. That expression is also related
to what (in a different context) has been called the ``information flow'' between a pair of subsystems~\cite{horowitz2014thermodynamics,horowitz_multipartite_2015}.
%The second bound is instead based on applying the inclusion-exclusion principle to the overlaps of the communities.

\textit{Rate matrix communities.--- } I write
$\NN$ for a particular set of $N$ subsystems, with finite state spaces
$\{X_i : i = 1, \ldots N\}$. $x$ indicates a vector in $X$, the joint space of $\NN$.
For any $A \subset \NN$, I write $-A := \NN \setminus A$. So for example $x_{-A}$ is the vector of all components of $x$ other than those in $A$.
A distribution over a set of values $x$ at time $t$ is written as $p^X(t)$, with its value for
$x \in X$ written as $p^X_x(t)$, or just $p_x(t)$ for short. 
Similarly, $p^{X|Y}_{x,y}(t)$ is the conditional distribution of $X$ given $Y$ at time $t$, evaluated
for the event $X=x, Y=y$ (which I sometimes shorten to $p_{x|y}(t)$).
I write Shannon entropy as $S(p_X(t))$, $S_t(X)$, or $S^{X}(t)$, as convenient.
I also write the conditional entropy of $X$ given $Y$ at $t$ as $S^{X|Y}(t)$.
I write the Kronecker delta as both $\delta(a, b)$ or $\delta^a_b$.

The joint system evolves as a multi-partite process, there is a set of time-varying stochastic rate matrices, 
$\{K^{x'}_x(i; t) : i = 1, \ldots, N\}$, where for all $i$, $K^{x'}_x(i; t) = 0$ if $x'_{-i} \ne x_{-i}$, and
where the joint dynamics over $X$ is governed by the master equation
\eq{
\frac{d p_x(t)}{dt} &= \sum_{x'} K^{x'}_{x}(t) p_{x'}(t)   \\
	&=  \sum_{x'} \sum_{i \in \NN} K^{x'}_{x}(i; t) p_{x'}(t)
}
Note that each subsystem can be driven by its own external work reservoir, according to a time-varying protocol.
%It will sometimes be convenient to define 
%The rate matrix given by \textbf{windowing $A$ onto $K(t)$} for some 
For any $A \subseteq \NN$ I define
\eq{
K^{x'}_{x}(A; t) &:= \sum_{i \in A} K^{x'}_{x}(i; t)
\label{eq:1}
}

Each subsystem $i$'s marginal distribution evolves as
\eq{
\label{eq:5a}
\frac{d p_{x_i}(t)}{dt} &= \sum_{x_{-i}}  \sum_{x'} \sum_j K^{x'}_{x}(j; t) p_{x'}(t) \\
	&= \sum_{x'} K^{x'_i, x'_{{-i}}}_{x_i, x'_{-i}}(i; t) p_{x'}(t)
\label{eq:6}
}
due to the multipartite nature of the process 
\footnote{To see this, note that if $x'_i \ne x_i$, then the only way for $ K^{x'}_{x}(j; t) p_{x'}(t)$ to be nonzero
is if $x'_{-i} = x_{-i}$ and $j = i$. If instead $x'_i = x_i$, $j$ can differ from $i$.
However, if $j \ne i$ then the sum over $x_{-i}$ in \cref{eq:5a} runs over all values of $x_j$. By normalization
of the rate matrix $K^{x'}_{x}(j; t)$, that sum must equal zero.}.
%Since $K^{x'_i, x'_{{-i}}}_{x_i, x'_{-i}}(i; t)$ can depend on the precise value of $x'_{-i}$,
\cref{eq:6} shows that in general the marginal distribution $p_{x_i}$ will not evolve according to a 
continuous-time Markov chain (CTMC) over $\Delta_{X_i}$.
%the set of distributions defined on $X_i$. 

For each subsystem $i$, I write $r(i; t)$ for any set of subsystems at time $t$ that includes $i$
where we can write
\eq{
K^{x'}_x(i; t) = K^{x'_{r(i;t)}}_{x_{r(i;t)}}(i; t) \delta(x'_{-r(i;t)}, x_{-r(i;t)})
\label{eq:def_community_rate}
} 
for an appropriate set of functions $K^{x'_{r(i;t)}}_{x_{r(i;t)}}(i; t)$.  In general, $r(i;t)$ is not uniquely defined,
since I make no requirement that it be minimal.
%Following~\cite{horowitz_multipartite_2015}, I refer to t
I refer to the elements of $r(i;t)$ as the \textbf{leaders} of $i$ at time $t$. Note that the leader relation need
not be symmetric. A \textbf{community} $\oo$ at time $t$ is a set of subsystems such that $i \in \oo$ implies that $r(i;t) \subseteq \oo$.
%It is easy to verify that a
Any intersection of two communities is a community, as is any union of two communities. Intuitively, a community
is any set of subsystems whose evolution is independent of the states of the subsystems outside
the community (although in general, the evolution of those external subsystems may depend on the states of subsystems in the community).
%Accordingly
%I define t

A specific set of communities that covers $\NN$ and is closed under intersections is a \textbf{community structure}.
A \textbf{community topology} is  a community structure that is closed under unions, with the communities of the structure being the open
sets of the topology. However, in general, unless explicitly stated otherwise, any community structure being discussed does not
have $\NN$ itself as a member.

As an example of these definitions,~\cite{hartich_stochastic_2014,barato_efficiency_2014,hartich_sensory_2016}
investigate a special type of bipartite system, where the ``internal''  subsystem $B$ observes the ``external'' subsystem $A$,
but cannot affect the dynamics of that external subsystem. So
$A$ is its own community, evolving independently of $B$, while $B$ is not its own community; its
dynamics depends on the state of $A$ as well as its own state. 
Another example of these definitions is illustrated in \cref{fig:1}.

For simplicity, from now on I assume that the set of communities doesn't change with $t$.
Accordingly I shorten $r(i;t)$ to $r(i)$. For any community $\oo$ I write
\eq{
K^{x'_\oo}_{x_\oo}(\oo; t) &:= \sum_{i \in \oo} K^{x'_{\oo}}_{x_{\oo}}(i; t)
}
So $K^{x'}_x(\oo; t) = K^{x'_\oo}_{x_\oo}(\oo; t)  \delta(x'_{-\oo}, x_{-\oo})$, by \cref{eq:1,eq:def_community_rate}.

At any time $t$, for any community $\omega$, $p_{x_{\omega}}(t)$ evolves as a CTMC with rate
matrix $K^{x'_{\omega}}_{x_\omega}(\omega; t)$:
\eq{
\frac{d p_{x_\oo}(t)}{dt} 
	&= \sum_{x'_\oo} K^{x'_\oo}_{x_\oo}(\oo; t) p_{x'_\oo}(t)
\label{eq:15aa}
}
(See SI.)
%(Recall notation introduced in \cref{eq:1}.) 
So a community evolves according
to a self-contained CTMC, in contrast to the general case of a single subsystem (cf. \cref{eq:6}).

I assume that each subsystem is attached to at most one thermal reservoir, and that
all such reservoirs have the same temperature~\cite{horowitz_multipartite_2015}.
Accordingly, the expected entropy flow (EF) rate of any community $\oo \subseteq \NN$ at time $t$ is
\eq{
\langle\dot{Q}^{\oo;K} (t)\rangle &= \sum_{x'_\oo,x_\oo} K^{x'_\oo}_{x_\oo}(\oo; t) p_{x'_\oo}(t) \ln \left[\dfrac{K^{x'_\oo}_{x_\oo}(\oo; t)}
				{K^{x_\oo}_{x'_\oo}(\oo; t)}\right]  
\label{eq:4}
}
which I often shorten to $\langle\dot{Q}^{\oo} (t)\rangle$~\cite{van2015ensemble,esposito2010three}.
(Note that this is entropy flow \textit{from} $\oo$ into the environment.) 
Make the associated definition that the expected EP rate of $\oo$ at time $t$ is
\eq{
\label{eq:15}
\langle\dot{\sigma}^{\oo; K} (t) \rangle &= \dfrac{d S^\oo(t)}{dt} + \langle\dot{Q}^\oo (t)\rangle    \\
 &\!\!\!\!\!\!\!=\sum_{x'_\oo,x_\oo} K^{x'_\oo}_{x_\oo}(\oo; t) p_{x'_\oo}(t) \ln \left[\dfrac{K^{x'_\oo}_{x_\oo}(\oo; t) p_{x'_\oo}(t)}
				{K^{x_\oo}_{x'_\oo}(\oo; t) p_{x_\oo}(t)}\right]
\label{eq:5}
}
which I often shorten to $\langle\dot{\sigma}^{\oo} (t)\rangle$.

I refer to $\langle\dot{\sigma}^{\oo} (t)\rangle$ as a \textbf{local} EP rate, and define the \textbf{global} EP rate as
$\dSt := \langle\dot{\sigma}^{\NN} (t) \rangle$.
For any community $\oo$, $\dSot \ge 0$, since
%(This follows from the fact that 
$\dSot$ has the usual
form of an EP rate of a single system. In addition, that lower bound of $0$ is achievable, e.g., if
$K^{x'_\oo}_{x_\oo}(\oo; t) p_{x'_\oo}(t) = K^{x_\oo}_{x'_\oo}(\oo; t) p_{x_\oo}(t)$ at time $t$ for all $x_\oo, x'_\oo$.

It is worth comparing the local EP rate to similar quantities that have been investigated in the literature.
In contrast to $\dSot$, the quantity ``$\sigma_X$'' introduced in the analysis of  (autonomous) bipartite systems
in~\cite{fluct.theorems.partially.masked.shiraishi.sagawa.2015} is the EP of a single trajectory, integrated over 
time. More importantly, its expectation can be
negative, unlike (the time-integration of) $\dSot$. 
%This makes it less useful for establishing strictly positive lower bounds on EP. 
On the other hand, the quantity ``$\dot{S}^X_i$'' considered 
in the analysis of bipartite systems in~\cite{horowitz2014thermodynamics} is a proper expected EP rate, and so
is non-negative. However, it (and its extension considered  in~\cite{horowitz_multipartite_2015}) is one term in a decomposition 
of the expected EP rate generated by a {single} community. It does not concern the EP rate of an entire community in a system with multiple communities.
%In contrast, the analysis in this paper considers how the
%global EP rate of a composite system with multiple communities depends both on the EP rates of those communities, and on
%how those communities are related to one another. 
Finally,  the quantity ``$\sigma_\Omega$'' considered 
in~\cite{fluct.theorems.partially.masked.shiraishi.sagawa.2015} is also
non-negative. However, it gives the total EP rate generated by a subset of all possible global state transitions, rather than
the EP rate of a community \footnote{It is
possible to choose that subset of state transitions so that $\sigma_\Omega$ concerns all transitions
in which one particular subsystem changes state while all others do not. In this case, like the quantity
 $\dot{S}^X_i$ considered in~\cite{horowitz2014thermodynamics}, $\sigma_\Omega$ is a single term in the decomposition
of the EP rate generated by a single community.}.

\textit{EP bounds from counterfactual rate matrices.---} To analyze the minimal EP rate in multipartite processes, we 
need to introduce two more definitions. First,
given any function $f : \Delta_X \rightarrow \R$ and any $A \subset \NN$ (not necessarily a community), define
the \textbf{$A$-(windowed) derivative} of $f(p(t))$  under rate matrix $K(t)$ as 
\eq{
\dfrac{d^{A;K(t)} f(p(t)}{dt}&=\sum_{x,x'} K^{x'}_x(A; t) p_{x'}(t) \dfrac{ \partial f(p_x(t))}{\partial p_x(t)}
}
(See \cref{eq:1}.) Intuitively, this is what the derivative of $f(p(t))$ would be
if (counterfactually) only the subsystems in $A$ were allowed to change their states.
%, so that 
%$p(t)$ were only evolving due to the dynamics of the subsystems in $A$.

In particular, 
%First, for any $A \subset \NN$ (not necessarily a community),
the $A$-derivative of the conditional entropy of $X$ given $X_A$ is 
\eq{
\dfrac{d^{A;K(t)}}{dt} S^{X | X_{A}}(p(t)) &= -\sum_{x,x'} K^{x'}_x(A; t) p_{x'}(t) \ln p_{x | x_A}(t)
\label{eq:27}
}
which I sometimes write as just $\dfrac{d^{A}}{dt} S^{X | X_{A}}(t)$.  (See Eq.\,4 in~\cite{horowitz_multipartite_2015} for a similar quantity.)
$\dfrac{d^{A}}{dt} S^{X | X_{A}}(t)$ measures how quickly the statistical coupling between $X_A$ and $X_{-A}$ changes
with time, if rather than evolving under the actual rate matrix, the system evolved under a counterfactual rate matrix, in which $x_{-A}$ is not allowed to change. 
In the SI it is shown that in the special case that
%in \cref{app:mask_non_negative} it is shown that
$A$ is a community, $\dfrac{d^{A}}{dt} S^{X | X_{A}}(t)$
is the derivative of the negative mutual information between $X_A$ and $X_{-A}$, {under the counterfactual rate
matrix} $K(A;t)$, and is therefore non-negative \footnote{In~\cite{horowitz2014thermodynamics,horowitz_multipartite_2015}, the 
$A$-derivative of the mutual information between $A$ and $\NN\setminus A$,
$d^A I(X_A; X_{\NN \setminus A}) /dt$, is interpreted as the
``information flow'' from $\NN\setminus A$ to $A$.  (See Eq.\,12 in~\cite{horowitz_multipartite_2015}.) 
However, in the scenarios considered in this paper $A$ will always
be a community. Therefore none of the subsystems in $A$ will evolve in a way
directly dependent on the state of any subsystem in $\NN\setminus A$ (nor vice-versa). So the fact that $d^A I(X_A; X_{\NN \setminus A}) /dt < 0$
will not indicate that information in $A$ concerning $\NN \setminus A$
``flows'' between $A$ and $\NN\setminus A$ in any sense.
%(Note that $d^A I(X_A; X_{\NN \setminus A}) /dt$ reflects probability currents between \textit{joint} 
%states of all subsystems, not between states of separate subsystems.)
In the current context, it would be more accurate to refer to $-d^A S(X | X_A) /dt$ as
the ``forgetting rate'' of $A$ concerning $\NN \setminus A$, than as ``information flow''. $d^A I(X_A; X_{\NN \setminus A}) /dt$ is also related to what is
termed ``nostalgia'' in~\cite{still2012thermodynamics}. However, that paper considers discrete-time rather than continuous-time processes,
where subsystems are required to start in thermal equilibrium.
}. 
%By the data-processing inequality however~\cite{cover_elements_2012}, the mutual information
%between two variables cannot increase under {any} Markov process.
%So $\dfrac{d^{A}}{dt} S^{X | X_{A}}(t)$
%%. Accordingly, 
%%the data-processing inequality~\cite{cover_elements_2012}
%%means that 
%%$\dfrac{d^{A}}{dt} S^{X | X_{A}}(t)$
% is non-negative, in this special case where $A$ is a community.

%(See \cref{app:mask_non_negative} for a complete proof.)

%In addition, i
The second definition we need is a variant of $\langle\dot{\sigma}^{\oo; K} (t) \rangle$, which
will be indicated by using subscripts rather than superscripts. 
For any $A \subseteq B \subseteq \NN$ where $B$ is a community (but $A$ need not be),
\eq{
%\!\!\!\!\!\!\!\!\!\!\!\!\!\! 
\langle {\dot{\sigma}}_{K(A;t); B} \rangle
	&:= \sum_{x,x'\in X_B} K^{x'}_{x}(A; t) p_{x'}(t) \ln \bigg[\dfrac{K^{x'}_{x}(A; t) p_{x'}(t)}
				{K^{x}_{x'}(A; t) p_{x}(t)}\bigg]
}
which I abbreviate as $\dSAt$ when $B = \NN$. 
$\dSAt$ is a global EP rate, only evaluated under the counterfactual rate matrix $K(A; t)$. Therefore it is non-negative. 
%Note that $\langle\dot{\sigma}^{A; K} (t) \rangle$ differs from $\dSAt$ in general.
%, since they involve sums over different spaces.
In contrast, $\langle\dot{\sigma}^{\oo; K} (t) \rangle$ is a local EP rate. In the special case that $A = \oo$ is a community, these two EP rates 
are related by
$\langle {\dot{\sigma}}_{K(A;t)}(t) \rangle = \langle\dot{\sigma}^{A; K} (t) \rangle + \frac{d^A}{dt} S^{X|X_A}(t)$
(see \cref{eq:a2} in the SI).

In the SI it is shown that
%\cref{app:proof_eq_30} it is shown that 
for any pair of communities, $\oo$ and $\oo' \subset \oo$, 
\eq{
\!\!\!\!\! \langle {\dot{\sigma}}^{\oo;K(\oo;t)}(t)\rangle &= \langle {\dot{\sigma}}^{\oo';K(\oo;t)}(t)\rangle + 
	\langle {\dot{\sigma}}_{K(\oo \setminus \oo';t);\oo}(t)\rangle \nonumber \\
\label{eq:30a}
			&\qquad\qquad\qquad +  \dfrac{d^{\oo'}}{dt}S^{X_\oo | X_{\oo'}}(t)
%\dfrac{d^{\oo';K(\oo;t)}}{dt} S^{X_\oo | X_{\oo'}}(t) 	
%\\
%	&\ge \dfrac{d^{\oo'}}{dt}S^{X_\oo | X_{\oo'}}(t)
}
%(The second lines uses the fact that both EP rates are non-negative, and the fact that windowing $K(\oo;t)$ to $\oo' \subset \oo$ is the same as windowing $K(t)$ to $\oo'$).
(See \cref{fig:1} for an illustration of such a pair of communities $\oo, \oo' \subset \oo$.)
The first term on the RHS
is the EP rate arising from the subsystems within community $\oo'$, and the second term is
the ``left over'' EP rate from the subsystems that are in $\oo$ but not in
$\oo'$. The third term is a time-derivative of the conditional entropy between
those two sets of subsystems. All three of these terms are non-negative, so each of them
provides a lower bound on the EP rate. 

%In particular,
%\eq{
%{\dot{\sigma}}^{\oo;K(\oo;t)}(t)\rangle &\ge \dfrac{d^{\oo'}}{dt}S^{X_\oo | X_{\oo'}}(t)
%}
\cref{eq:30a} is the major result of this paper. 
In particular, setting $\oo = \NN$ and then consolidating notation by rewriting $\oo'$ as $\oo$, \cref{eq:30a} shows that for any community $\oo$,
\eq{
\label{eq:30b}
\dSt  &= \langle {\dot{\sigma}}^{\oo}(t)\rangle + 
	\langle {\dot{\sigma}}_{K(\NN \setminus \oo;t)}(t)\rangle  + \dfrac{d^{\oo;K(\NN;t)}}{dt} S^{X | X_{\oo}}(t) 	\\
	& \ge \dfrac{d^{\oo}}{dt}S^{X | X_{\oo}}(t) 
\label{eq:30}
}
(where the shorthand notation has been used).
%\cref{eq:30} provides a non-negative lower bound on the Landauer
%loss rate arising due to the community structure; none of the details of 

As an example of \cref{eq:30}, consider again the type of bipartite process analyzed
in~\cite{hartich_stochastic_2014,barato_efficiency_2014,hartich_sensory_2016}. Suppose we set $\oo$ to contain only what in~\cite{hartich_stochastic_2014} is called the ``external'' 
subsystem. Then if we also make the assumption of those papers that
the full system is in a stationary state, $d S^X / dt = d S^{X_\oo} /dt = d S^{X_{-\oo}} / dt = 0$. So by \cref{eq:27},
\eq{
\dfrac{d^{\oo}}{dt}S^{X | X_{\oo}}(t) &= -\dfrac{d^{-\oo}}{dt}S^{X | X_{-\oo}}(t) 
} 
(The RHS is called the ``learning rate'' of the internal subsystem about the external subsystem --- see Eq.\,(8) in~\cite{Brittain_2017},
and note that the rate matrix is normalized.)

So in this scenario, \cref{eq:30} above reduces to Eq.\,7 of~\cite{barato_efficiency_2014}, which lower-bounds the global EP rate
by the learning rate. However, \cref{eq:30} lower-bounds the global EP rate 
%of the system considered in~\cite{barato_efficiency_2014}  
even if the system is not in a stationary state, which need be the case with the learning rate
\footnote{Recall from the discussion 
above of the scenario considered in~\cite{barato_efficiency_2014} that while the external subsystem $\oo$ is its own community, 
the internal subsystem is {not}. This means that in general, if the full system is not in a stationary state, then the learning rate 
of the internal subsystem about the external subsystem (as defined in~\cite{barato_efficiency_2014}) cannot be expressed as
$d^\omega S^{X | X_{\omega}}(t) /dt$ with $\oo$ being a community.}. More generally, \cref{eq:30b} applies to arbitrary multipartite
processes, not just those with two subsystems, and is an exact equality rather than just a bound. 
%Finally, as illustrated below, 
%\cref{eq:30a} goes even further, and allows us to ``iteratively decompose'' the community structure, to get more refined 
%formulates for the EP rate. Specifically, 

In some situations we can get an even more refined decomposition of EP rate by substituting \cref{eq:30a} into \cref{eq:30b} 
to expand the first EP rate on the RHS of \cref{eq:30b}. This gives a larger lower bound on $\dSt$
than the one in \cref{eq:30}.
For example, if $\oo$ and $\oo' \subset \oo$ are both communities under $K(t)$, then 
\eq{
\label{eq:22}
\dSt &= \langle {\dot{\sigma}}^{\oo'}(t)\rangle + 
	\langle {\dot{\sigma}}_{K(\oo \setminus \oo';t);\oo}(t)\rangle  + \dfrac{d^{\oo'}}{dt} S^{X_\oo | X_{\oo'}}(t)  \\
	&\qquad	+ \langle {\dot{\sigma}}_{K(\NN \setminus \oo;t)}(t)\rangle + \dfrac{d^{\oo}}{dt} S^{X | X_{\oo}}(t)  \nonumber \\
	&\ge    \dfrac{d^{\oo}}{dt} S^{X | X_{\oo}}(t) + \dfrac{d^{\oo'}}{dt} S^{X_\oo | X_{\oo'}}(t)
\label{eq:37}
}
Both of the terms on the RHS in \cref{eq:37} are non-negative. In addition, both can
be evaluated without knowing the detailed physics occurring {within} communities $\oo$
or $\oo'$, only knowing how the statistical coupling {between} communities evolves with time.

This can be illustrated with the scenario depicted in \cref{fig:1}. Using the
communities $\oo$ and $\oo'$ specified there, \cref{eq:37} says that the global EP rate is lower-bounded
by the sum of two terms. The
first is the derivative of the negative mutual information between subsystem $4$ and the first three
subsystems, if subsystem $4$ were held fixed. The second is the derivative of the negative mutual information 
between subsystem $3$ and the first two subsystems, if those two subsystems were held fixed.

Alternatively, suppose that $\oo$ is a community under $K$, 
and that some set of subsystems $\alpha$ is a community under 
%(a community structure of) 
$K(\NN \setminus \oo; t)$.
Then since the term $\langle {\dot{\sigma}}_{K(\NN \setminus \oo;t)}(t)\rangle$ in \cref{eq:30b} is 
a global EP rate over $\NN$ under rate matrix $K(\NN \setminus \oo;t)$, we can again feed \cref{eq:30a} into \cref{eq:30b},
(this time to expand the second rather than first term on the RHS of \cref{eq:30b})  to get
\eq{
\dSt &= \langle {\dot{\sigma}}^{\oo;K(t)}(t)\rangle + \dfrac{d^{\oo}}{dt} S^{X | X_{\oo}}(t) 
		+ \langle {\dot{\sigma}}^{\alpha; {K(\NN \setminus \oo; t)}}(t)\rangle   \nonumber \\
\label{eq:37b}
	&\;\;\;\;+ \langle {\dot{\sigma}}_{K((\NN\setminus \oo)\setminus \alpha ;t);\NN\setminus \oo}(t)\rangle
		+ \dfrac{d^{\alpha;K(\NN\setminus \oo; t)}}{dt} 
				S^{X_{\NN} | X_{\alpha}}(t)  \\
	&\ge \;\;\dfrac{d^{\oo}}{dt}S^{X | X_{\oo}}(t)  + \dfrac{d^{\alpha;K(\NN\setminus \oo; t)}}{dt} 
				S^{X_{\NN} | X_{ \alpha}}(t)
\label{eq:37a}
}
The RHS of \cref{eq:37a} also exceeds the bound in \cref{eq:30}, by the negative $\alpha$-derivative of the
mutual information between $X_{\NN\setminus\alpha}$ and $X_{\alpha}$,  under the rate matrix $K(\NN \setminus \oo;t)$.

\textit{Example.---} Depending on the full community structure, we may be able to combine \cref{eq:30a,eq:37b} into an even larger 
lower bound on the global EP rate than \cref{eq:37a}. To illustrate this, return to the
scenario depicted in \cref{fig:1}. 
Take $\oo = \{1, 2, 3\}$ and $\alpha = \{3, 4\}$, as indicated in that figure. Note that the four sets $\{1\}, \{2\}, \{3\}, \{3, 4\}$
form a  community structure of
\eq{
K^{x'}_{x}(\NN \setminus \oo;t) &= K^{x'}_{x}(\{4\}; t) 
%	&= K^{x'}_x(4; t) + K^{x'}_x(5;t)
\label{eq:ex:1}
} 
since under $K^{x'}_x(\{4\}; t)$, neither subsystem $1, 2$ nor $3$ changes its state.
%is independent of both $x'_1$ and $x'_2$.
So $\alpha$ is a member of a community structure of $K(\NN \setminus \oo;t)$, and we can apply  \cref{eq:37b}.

The first term in \cref{eq:37b}, $\langle {\dot{\sigma}}^{\oo}(t)\rangle$, 
is the local EP rate that would be jointly generated by the set of
three subsystems $\{1, 2, 3)$, if they evolved in isolation from the other subsystem, under the self-contained rate matrix
\eq{
K^{x'}_x(\{1,2,3\};t) &= K^{x'_1, x'_2, x'_3}_{x_1, x_2, x_3}(1, t) + K^{x'_1, x'_2, x'_3}_{x_1, x_2, x_3}(2, t) + K^{x'_1, x'_2, x'_3}_{x_1, x_2, x_3}(3, t)
}

The third term in \cref{eq:37b} is the local  EP rate that would be jointly generated by the 
two subsystems $\{3, 4\}$, if they evolved in isolation from the other two subsystems, but rather than do so
under the rate matrix $K(\alpha; t) = K(\{3, 4\}; t)$,  they did so under the rate matrix $K^{x'}_{x}(\NN \setminus \oo;t)$ 
%\eq{
%K^{x'}_{x}(\NN \setminus \oo;t) = K^{x'}_{x}(\{4\}; t) 
%}
given in \cref{eq:ex:1}.
(Note that $K^{x'}_{x}(\NN \setminus \oo;t) = 0$ if $x'_3 \ne x_3$, unlike $K^{x'}_x(\{3, 4\}; t)$.)
The fourth term in \cref{eq:37b} is the global EP rate that would be generated by evolving all four subsystems under the rate matrix 
for the subsystems in $(\NN \setminus \oo) \setminus \alpha$. But there are no subsystems in that set. So this fourth term is zero.

%windowing $\alpha$ onto $K(\NN\setminus \oo;t)$, i.e., under $K^{x'}_{x}(\{4\}; t)$.
%%the rate matrix  in \cref{eq:ex:1}.
%%This equals the third term, $K^{x'}_x(III;t) $.
%%%that gives the third term  in \cref{eq:37b},
%%%$K^{x'}_x(4; t) + K^{x'}_x(5;t)$.

Both that first and third term in \cref{eq:37b} are non-negative. The remaining two terms
 -- the second and the fifth in \cref{eq:37b} --- 
are also non-negative. However,  in 
contrast to the terms just discussed, these two depend only
on derivatives of mutual informations. Specifically, the second term in \cref{eq:37b} is the negative 
derivative of the mutual information between the joint random variable $X_{1, 2, 3}$ and $X_4$, under the rate
matrix $K^{x'}_x(\{1,2,3\};t)$.
%\eq{
%K^{x'}_x(II;t) &:= K^{x'}_x(\{1, 2, 3\};t) &= K^{x'}_x(1;t) + K^{x'}_x(2;t)  + K^{x'}_x(3;t) 
%}
Next, since $ \NN \setminus \alpha = \{1, 2\}$, the fifth term is the negative 
derivative of the mutual information between $X_{1,2}$ and $X_{3,4}$, under the rate
matrix given by windowing $\alpha$ onto $K(\NN\setminus \oo;t)$, i.e., under the rate matrix $K^{x'}_{x}(\{4\}; t)$.
%in \cref{eq:ex:1}.

Recalling that $\oo := \{1, 2, 3\}, \alpha:=\{3,4\}$ and defining $\gamma :=\{4\}$, we can 
combine these results to express the global EP rate of the system illustrated in \cref{fig:1} in terms of the rate matrices
of the four subsystems:
\eq{
&\dSt = \sum_{x'_\oo,x_\oo} K^{x'_\oo}_{x_\oo}(\oo; t) p_{x'_\oo}(t) \ln \left[\dfrac{K^{x'_\oo}_{x_\oo}(\oo; t) p_{x'_\oo}(t)}
				{K^{x_\oo}_{x'_\oo}(\oo; t) p_{x_\oo}(t)}\right] \nonumber \\
	& \qquad \qquad+ \sum_{x'_\alpha,x_\alpha} K^{x'_\alpha}_{x_\alpha}(\gamma; t) p_{x'_\alpha}(t) \ln \left[\dfrac{K^{x'_\alpha}_{x_\alpha}(\gamma; t) p_{x'_\alpha}(t)}
				{K^{x_\alpha}_{x'_\alpha}(\gamma; t) p_{x_\alpha}(t)}\right]   \nonumber \\
	& \; -\sum_{x,x'} \left[K^{x'}_x(\oo; t) p_{x'}(t) \ln p_{x | x_\oo}(t) + K^{x'}_x(\gamma; t) p_{x'}(t) \ln p_{x | x_\alpha}(t)\right]
\label{eq:25}
}
All five terms on the RHS of \cref{eq:25} are non-negative.
Translated to this scenario, previous results concerning learning rates consider the special case of a stationary state
$p_x(t)$, and only tell us that the global EP rate is bounded by the fourth term on the RHS of \cref{eq:25}:
\eq{
&\dSt \ge -\sum_{x,x'} K^{x'}_x(\oo; t) p_{x'}(t) \ln p_{x | x_\oo}(t)
}

Finally note that we also have a community $\oo' = \{3\}$ which is a
proper subset of both $\oo$ and $\alpha$. So, for example, we can plug this $\oo'$ into \cref{eq:30a} 
to expand the first term in \cref{eq:37b}, $\langle {\dot{\sigma}}^{\oo;K(\oo;t)}(t)\rangle$, replacing it with 
the sum of three terms. The first of these three new terms, $\langle {\dot{\sigma}}^{\oo';K(\oo;t)}(t)\rangle$, is
the local EP rate generated by subsystem $\{3\}$ evolving in isolation from all the other subsystems.
The second of these new terms,  $\langle {\dot{\sigma}}_{K(\oo \setminus \oo';t);\oo}(t)\rangle$,
is the EP rate that would be generated if the set of three subsystems $\{1, 2, 3\}$ evolved in isolation from
the remaining subsystem, $4$, but under the rate matrix 
\eq{
K(\oo \setminus \oo'; t) &= K^{x'_1, x'_2, x'_3}_{x_1, x_2, x_3}(1; t) + K^{x'_1, x'_2, x'_3}_{x_1, x_2, x_3}(2; t)
}
The third new
term is the negative derivative of the mutual information between 
$X_{1, 2}$ and $X_3$, under rate matrix $K(\oo; t)$. All three of these new terms are non-negative.

\textit{Discussion.---} 
There are other decompositions of the global EP rate which are of interest, but don't always
provide non-negative lower-bounds on the EP rate. One of them based on the inclusion-exclusion principle is discussed in \cref{app:in_ex_sums}.
Future work involves combining these (and other) decompositions, to get even larger lower bounds.
%In addition to relating such decompositions based on the relationship \textit{between}
%communities, it may also prove possible to integrate some of these decompositions of EP rates 
%with those that consider the rate matrices \textit{within} communities~\cite{shiraishi2015fluctuation,shiraishi_ito_sagawa_thermo_of_time_separation.2015,bisker2017hierarchical}.
%%There are also many directions for future work. For example,
%%it was recently shown that the \textit{discrete}-time Landauer loss of subsystem processes
%%provides novel fluctuation theorems and thermodynamic uncertainty relations
%%for physical systems that implement any given Bayes 
%%net~\cite{wolpert_thermo_bayes_nets_2019}. It might be possible to
%%apply the same approach to the bounds on \textit{continuous}-time Landauer loss rate derived 
%%above, to derive novel fluctuation theorems and thermodynamic uncertainty relations for
%%physical systems that implement any given multipartite process.

$ $

%\section*{Acknowledgments} 
 I would like to thank Sosuke Ito, Artemy Kolchinsky, Kangqiao Liu, Alec Boyd, Paul Riechers, 
and especially Takahiro
Sagawa for stimulating discussion. This work was supported by the Santa Fe Institute, 
Grant No. CHE-1648973 from the US National Science Foundation and Grant No. FQXi-RFP-IPW-1912 from the FQXi foundation.
The opinions expressed in this paper are those of the author and do not necessarily 
reflect the view of the National Science Foundation.

\bibliographystyle{amsplain}
%\bibliography{../../../LANDAUER.Shared.2016/thermo_refs.main,../../../LANDAUER.Shared.2016/thermo_refs_2}

\newcommand{\arXiv}[2]{\href{http://arxiv.org/abs/#1}{arXiv:#1 #2}}
\providecommand{\bysame}{\leavevmode\hbox to3em{\hrulefill}\thinspace}
\providecommand{\MR}{\relax\ifhmode\unskip\space\fi MR }
% \MRhref is called by the amsart/book/proc definition of \MR.
\providecommand{\MRhref}[2]{%
  \href{http://www.ams.org/mathscinet-getitem?mr=#1}{#2}
}
\providecommand{\href}[2]{#2}

\appendix

\section{Proof of \cref{eq:15aa}}
\label{app:proof_eq_15aa}

Write
\eq{
\frac{d p_{x_\oo}(t)}{dt}  	&= \sum_{x_{-\oo}} \sum_{x'} \sum_j K^{x'}_{x}(j;t)p_{x'}(t)  \nonumber \\
	&= \sum_{x'} p_{x'}(t) \left[\sum_{j \in \oo}  \sum_{x_{-\oo}} K^{x'}_{x}(j;t) + \sum_{j \not\in \oo}  \sum_{x_{-\oo}} K^{x'}_{x}(j;t)\right] 
% \nonumber \\
%	&= \sum_{x'} p_{x'}(t)  \sum_{x_{-\oo}} \sum_{j \in \oo} K^{x'}_{x}(j;t)   
}
%where the last line uses the fact that
If $j \not \in \oo$, then a sum over all $x_{-\oo}$ in particular runs over all $x_j$. Therefore we get
\eq{
\frac{d p_{x_\oo}(t)}{dt}  	&= \sum_{x'} p_{x'}(t)  \sum_{x_{-\oo}} \sum_{j \in \oo} K^{x'}_{x}(j;t)   
}

Using the fact that we have a multipartite process and then the
fact that $\oo$ is a community, we can expand this remaining expression as
\eq{
\sum_{x'}  \sum_{j \in \oo} K^{x'_\oo, x'_{-\oo}}_{x_\oo, x'_{-\oo}}(j;t) p_{x'_\oo, x'_{-\oo}}(t)
	&= \sum_{x'}  \sum_{j \in \oo} K^{x'_\oo}_{x_\oo}(j;t) p_{x'_\oo, x'_{-\oo}}(t) \nonumber \\
	&= \sum_{x'_\oo}  \sum_{j \in \oo} K^{x'_\oo}_{x_\oo}(j;t) p_{x'_\oo}(t) 
}
To complete the proof plug in the definition of $K^{x'_\oo}_{x_\oo}(\oo; t)$.

%\section{Proof of \cref{eq:11a}}
\section{Expansions of EP rates in multipartite processes}
\label{app:proof_eq_11}

%We first derive the following useful lemma:

\begin{widetext}

\begin{lemma}
Suppose we have a multipartite process over a set of systems $\NN$ defined by a set of rate matrices $\{K^{x'}_x(i; t)\}$
and a subset $A \in \NN$. Then
\eq{
\label{eq:a0}
\sum_ {x,x'} K^{x'}_{x}(A;t)  p_{x'}(t) \ln \left[\dfrac{K^{x'}_{x}(t)}
				{K^{x}_{x'}(t)}\right] &= \sum_{x,x'} K^{x'}_x(A; t) p_{x'}(t) \ln \left[\dfrac{K^{x'}_x(A; t)}
				{K^{x}_{x'}(A; t) }\right]	\\
	&= \sum_{i \in A, x,x'} K^{x'}_x(i; t) p_{x'}(t) \ln \left[\dfrac{K^{x'}_x(i; t)}
				{K^{x}_{x'}(i; t) }\right]
\label{eq:a1}
} 
If in addition $A$ is a community under $K$, then
we can also write the quantity in \cref{eq:a0} as
\eq{
\sum_{x_A,x'_A} K^{x'_A}_{x_A}(A;t)  p_{x'_A}(t) \ln \left[\dfrac{K^{x'_A}_{x_A}(A;t)}
				{K^{x_A}_{x'_A}(A; t)}\right]
\label{eq:a2}
%\nonumber
}
\label{lemma:a1}
\end{lemma}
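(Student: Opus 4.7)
The plan is to exploit the multipartite property: for any subsystem $i$, the rate $K^{x'}_x(i;t)$ vanishes unless $x'_{-i} = x_{-i}$. This forces any nonzero off-diagonal rate $K^{x'}_x(t)$ to come from a single subsystem, namely the unique index $j$ where $x$ and $x'$ differ, and on that support one has $K^{x'}_x(t) = K^{x'}_x(j;t)$. The same logic applied to $K^{x'}_x(A;t) = \sum_{i \in A} K^{x'}_x(i;t)$ says this sum reduces to the single term $K^{x'}_x(j;t)$ when $j \in A$, and is identically zero otherwise. Diagonal terms $x = x'$ contribute nothing to any of the three sums by the usual convention $\ln 1 = 0$.

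With this in hand, the first equality in \cref{eq:a0} is essentially free. On any pair $(x, x')$ where the prefactor $K^{x'}_x(A;t)$ is nonzero, both $K^{x'}_x(t)$ and $K^{x'}_x(A;t)$ collapse to $K^{x'}_x(j;t)$, and likewise for the reverse transition; hence the two logarithm factors coincide. The step to \cref{eq:a1} is then a reindexing of the double sum over $(x, x')$: partition the off-diagonal pairs according to which unique subsystem $i \in A$ is responsible for the transition, replace $K^{x'}_x(A;t)$ by $K^{x'}_x(i;t)$ in both the prefactor and the logarithm, and read off the right-hand side.

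For the community case I would invoke the identity $K^{x'}_x(A;t) = K^{x'_A}_{x_A}(A;t)\,\delta(x'_{-A}, x_{-A})$, which follows by summing the community form of each $K^{x'}_x(i;t)$ in \cref{eq:def_community_rate} over $i \in A$ and using that $r(i) \subseteq A$. Substituting this into \cref{eq:a0} forces $x_{-A} = x'_{-A}$ on the surviving support; writing this common value as $y$, no factor other than $p_{x'_A, y}(t)$ depends on $y$, so summing over $y$ first uses only the marginalization identity $\sum_y p_{x'_A, y}(t) = p_{x'_A}(t)$ and yields exactly \cref{eq:a2}.

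The main obstacle --- really the only subtlety --- is the bookkeeping of which subsystem indices can participate in a given off-diagonal transition and the consistent handling of the diagonal. Once one fixes the convention that diagonal terms contribute zero (because $\ln(K^x_x / K^x_x) = 0$), all three identities follow from the multipartite structure together with the community decomposition; no analytic input beyond marginalization is needed.
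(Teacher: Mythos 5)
Your proposal is correct and follows essentially the same route as the paper's proof: both isolate the unique subsystem $j$ responsible for each off-diagonal transition via the multipartite property (so $K^{x'}_x(t)$, $K^{x'}_x(A;t)$ and $K^{x'}_x(j;t)$ coincide on the relevant support), discard diagonal terms via $\ln 1 = 0$, and for the community case substitute $K^{x'}_x(A;t) = K^{x'_A}_{x_A}(A;t)\,\delta(x'_{-A},x_{-A})$ and marginalize over $x_{-A}$. No gaps.
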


\begin{proof}
Invoking the multipartite nature of the process allows us to write 
\eq{
& \sum_{x,x'} K^{x'_A}_{x_A}(A;t) \delta^{x'_{-A}}_{x_{-A}} p_{x'}(t) \ln \left[\dfrac{K^{x'}_{x}(t)}
				{K^{x}_{x'}(t) }\right] \;= \;
	\sum_{i \in A, x',x} K^{x'}_x(i; t) p_{x'}(t) \ln \left[\dfrac{K^{x'}_x(t)}
				{K^{x}_{x'}(t) }\right]  \nonumber \\
&\qquad\qquad = \sum_{i \in A, x_i,x'_i\ne x_i,x_{-i}} K^{x'_i,x_{-i}}_{x_i,x_{-i}}(i; t) p_{x'_i,x_{-i}}(t) \ln \left[\dfrac{\sum_j K^{x'_i,x_{-i}}_{x_i,x_{-i}}(j;t)}
				{\sum_j K^{x_i,x_{-i}}_{x'_i,x_{-i}}(j;t) }\right]  +  
		\sum_{i \in A, x_i,x_{-i}} K^{x_i,x_{-i}}_{x_i,x_{-i}}(i; t) p_{x_i,x_{-i}}(t) \ln \left[\dfrac{\sum_j K^{x_i,x_{-i}}_{x_i,x_{-i}}(j;t)}
				{\sum_j K^{x_i,x_{-i}}_{x_i,x_{-i}}(j;t) }\right]   \nonumber \\
\label{eq:b3}
&\qquad\qquad = \sum_{i \in A, x_i,x'_i\ne x_i,x_{-i}} K^{x'_i,x_{-i}}_{x_i,x_{-i}}(i; t) p_{x'_i,x_{-i}}(t) \ln \left[\dfrac{\sum_{j \in A} K^{x'_i,x_{-i}}_{x_i,x_{-i}}(j;t)}
				{\sum_{j \in A} K^{x_i,x_{-i}}_{x'_i,x_{-i}}(j;t) }\right]  
%+  
%		\sum_{i \in A, x_i,x_{-i}} K^{x_i,x_{-i}}_{x_i,x_{-i}}(i; t) p_{x_i,x_{-i}}(t) \ln \left[\dfrac{\sum_j K^{x_i,x_{-i}}_{x_i,x_{-i}}(j;t)}
%				{\sum_j K^{x_i,x_{-i}}_{x_i,x_{-i}}(j;t) }\right]   
\\
	&\qquad\qquad=  \sum_{i \in A, x_i,x'_i,x_{-i}} K^{x'_i,x_{-i}}_{x_i,x_{-i}}(i; t) p_{x'_i,x_{-i}}(t) \ln \left[\dfrac{K^{x'_i,x_{-i}}_{x_i,x_{-i}}(i;t)}
				{K^{x_i,x_{-i}}_{x'_i,x_{-i}}(i;t) }\right]
\label{eq:b4}
}
\cref{eq:b3} establishes \cref{eq:a0} and \cref{eq:b4} establishes \cref{eq:a1}. 

To establish \cref{eq:a2}, use the hypothesis that $A$ is a community to expand
\eq{
\sum_{x,x'} K^{x'}_x(A; t) p_{x'}(t) \ln \left[\dfrac{K^{x'}_x(A; t)} {K^{x}_{x'}(A; t) }\right] &= 
	\sum_{x,x'} K^{x'}_x(A; t) \delta^{x'_{-A}}_{x_{-A}} p_{x'}(t) \ln \left[\dfrac{K^{x'_A}_{x_A}(A; t)\delta^{x'_{-A}}_{x_{-A}}}
				{K^{x}_{x'}(A; t) \delta^{x_{-A}}_{x'_{-A}}}\right] \\
	&= 	\sum_{x,x'} K^{x'_A}_{x_A}(A; t) \delta^{x'_{-A}}_{x_{-A}} p_{x'}(t) \ln \left[\dfrac{K^{x'_A}_{x_A}(A; t)}
				{K^{x}_{x'}(A; t) }\right] \\
	&= 	\sum_{x_A,x_{-A},x'_A} K^{x'_A}_{x_A}(A; t) p_{x'_A,x_{-A}}(t) \ln \left[\dfrac{K^{x'_A}_{x_A}(A; t)}
				{K^{x}_{x'}(A; t) }\right] \\
	&= 	\sum_{x_A,x'_A} K^{x'_A}_{x_A}(A; t) p_{x'_A}(t) \ln \left[\dfrac{K^{x'_A}_{x_A}(A; t)}
				{K^{x}_{x'}(A; t) }\right]
}

\end{proof}

\section{Proof that  if $A$ is a community, then $\dfrac{d^{A}}{dt} S^{X | X_{A}}(t) \ge 0$}
\label{app:mask_non_negative}

If $A$ is a community, then 
\eq{
 -\sum_{x,x'} K^{x'}_x(A; t) p_{x'}(t) \ln p_{x_A}(t) &= -\sum_{x_A,x'_A} K^{x'_A}_{x_A}(A; t) p_{x'}(t) \ln p_{x_A}(t) \nonumber \\
	 &= \dfrac{d}{dt} S^{X_A}(t)
}
We can combine this with \cref{eq:27} to expand
\eq{
\dfrac{d^{A}}{dt} S^{X | X_{A}}(t) &= \dfrac{d}{dt} S^{X}(t) -  \dfrac{d}{dt} S^{X_A}(t)
}
(Note that this expansion need not hold if $A$ is not a community.)

Suppose we could also establish that because subsystems outside of $A$ don't evolve under $K(A; t)$, then $S^{X_{-A}}(t)$ doesn't change in time, i.e., that
\eq{
\dfrac{d^A}{dt} S^{X_{-A}}(t) &= 0
\label{eq:d0}
}
This would then imply that 
\eq{
\dfrac{d^{A}}{dt} S^{X | X_{A}}(t) &= \dfrac{d^{A}}{dt} I^{X_{-A} | X_{A}}(t)
}
the windowed time-derivative of the mutual information between the communities in $A$ and those outside of it.
However, $S^{X_{-A}}(t)$ is given by marginalizing $p^X(t)$ down to the subsystems in $-A$, by averaging over $x_A$.
In general, if $A$ is not a community, those subsystems are statistically coupled with the ones in $A$. So as the subsystems in $A$
evolve,  $S^{X_{-A}}(t)$  might change, i.e., \cref{eq:d0} may not hold.

This turns out not to be a problem when $A$ is a community. To see this, first simplify notation by 
using $P$ rather than $p$ to indicate joint distributions that would evolve if $K(t)$ were
replaced by the counterfactual rate matrix $K(A;t)$, starting from $p_x(t)$. By definition,
\eq{
K^{x_A(t), x_{-A}(t)}_{x_A(t+\delta t), x_{-A}(t+\delta t)}(A;t) &=
\lim_{\delta t \rightarrow 0} \dfrac{\delta^{x_A(t), x_{-A}(t)}_{x_A(t+\delta t), x_{-A}(t+\delta t)}
		- P\left(x_A(t+\delta t), x_{-A}(t+\delta t) \;|\; x_A(t), x_{-A}(t)\right)} {\delta t}
\label{eq:d1}
}
However, since by hypothesis $A$ is a community, 
\eq{
K^{x_A(t), x_{-A}(t)}_{x_A(t+\delta t), x_{-A}(t+\delta t)}(A;t) &= K^{x_A(t)}_{x_A(t+\delta t)}(A;t)\delta^{x_{-A}(t)}_{x_{-A}(t+\delta t)}
}
Plugging this into \cref{eq:d1} and summing both sides over $x_A(t+\delta t)$ shows that to leading order in $\delta t$,
\eq{
P\left(x_{-A}(t+\delta t) \;|\; x_A(t), x_{-A}(t)\right) &= \delta^{x_{-A}(t+ \delta t)}_{x_{-A}(t)}
\label{eq:d2}
}
\cref{eq:d2} in turn implies that to leading order in $\delta t$,
\eq{
P(x(t+\delta t) \;|\; x(t)) &= P\left(x_A(t+\delta t) \;|\; x_{-A}(t+\delta t), x_A(t), x_{-A}(t)\right)
			P\left(x_{-A}(t+\delta t) \;|\; x_A(t), x_{-A}(t)\right) \\
	&= P\left(x_A(t+\delta t) \;|\;x_A(t), x_{-A}(t),  x_{-A}(t+\delta t) =  x_{-A}(t)\right) \delta^{x_{-A}(t+ \delta t)}_{x_{-A}(t)}
}
This
formalizes the statement in the text that under the rate matrix $K(A)$, $x_{-A}$ does not change
its state. 

Next, since $A$ is a community under $K(A;t)$, we can expand further to get
\eq{
P(x_A(t+\delta t), x_{-A}(t+\delta t) \;|\; x_A(t), x_{-A}(t)) 
	&= P\left(x_A(t+\delta t) \;|\;x_A(t)\right) \delta^{x_{-A}(t+ \delta t)}_{x_{-A}(t)}
}
So the full joint distribution is
\eq{
P(x_A(t+\delta t), x_{-A}(t+\delta t), x_A(t), x_{-A}(t)) 
	&= P\left(x_A(t+\delta t) \;|\;x_A(t)\right) \delta^{x_{-A}(t+ \delta t)}_{x_{-A}(t)} P(x_A(t), x_{-A}(t))
}

We can use this form of the joint distribution to establish the following two equations
\eq{
\label{eq:d3}
S_P(X_{-A}(t + \delta t) \;|\; X_{-A}(t), X_A(t+\delta t) &= 0 \\
\label{eq:d4}
S_P(X_{-A}(t) \;|\; X_{-A}(t+\delta t), X_A(t+\delta t) &= 0
}
%where all entropies are implicitly defined in terms of $P$.
Applying the chain rule for entropy to decompose $S_P(X_{-A}(t), X_{-A}(t+\delta t) \;|\; X_{A}(t+\delta t))$ in 
two different ways, and plugging \cref{eq:d3,eq:d4}, respectively, into those two decompositions, we see that
\eq{
S_P(X_{-A}(t+\delta t) \;|\; X_A(t+\delta t)) &= S_P(X_{-A}(t) \;|\; X_A(t+\delta t))
\label{eq:d5}
}

Next, use \cref{eq:d5} to expand
\eq{
\dfrac{d^{A;K}}{dt} S^{X | X_{A}}(t) &= \lim_{\delta t \rightarrow 0} 
	\dfrac{S_P(X_{-A}(t) \;|\; X_A(t)) - S_P(X_{-A}(t+\delta t) \;|\; X_A(t+\delta t))} {\delta t} \\
	&=  \lim_{\delta t \rightarrow 0} 
	\dfrac{S_P(X_{-A}(t) \;|\; X_A(t)) - S_P(X_{-A}(t) \;|\; X_A(t+\delta t))} {\delta t}
}
Add and subtract $S(X_{-A}(t))$ in the numerator on the RHS to get
\eq{
\dfrac{d^{A;K}}{dt} S^{X | X_{A}}(t)   &=  \lim_{\delta t \rightarrow 0} 
	\dfrac{I(X_{-A}(t) \;|\; X_A(t)) - I(X_{-A}(t) ; X_A(t+\delta t))} {\delta t}
}
Since $X_{-A}(t)$ and $X_A(t+\delta t)$ are conditionally independent given $X_A(t)$,
the difference of mutual informations in the numerator on the RHS is non-negative,
by the data-processing inequality~\cite{cover_elements_2012}. 

This completes the proof.

\section{Proof of \cref{eq:30a}}
\label{app:proof_eq_30}

For simplicity of the exposition, treat $\oo$ as though it were all of $\NN$, 
i.e., suppress the $\oo$ index in $x_\oo$ and $x'_\oo$, suppress the $\oo$ argument of $K(\oo;t)$, and implicitly restrict
sums over subsystems $i$ to elements of $\oo$. Then using the definition of $K(\oo';t)$, we can expand
\eq{
\dot{\sigma}(t) &= \sum_{x,x'} K^{x'}_{x}(t) p_{x'}(t)\ln \left[\dfrac{K^{x'}_{x}(t)}{K^{x}_{x'}(t)p_x(t)}\right] \\
	&= \sum_{x,x'} K^{x'}_{x}(\oo'; t) p_{x'}(t) \ln \left[\dfrac{K^{x'}_{x}(t)}{K^{x}_{x'}(t)p_x(t)}\right]
		+ \sum_{i \not\in \oo'}\sum_{x,x'} K^{x'}_x(i;t) p_{x'}(t) \ln \left[\dfrac{K^{x'}_{x}(t)}{K^{x}_{x'}(t)p_x(t)}\right] 
\label{eq:c2}
}
Since $\oo'$ is a community, by \cref{eq:a2} we can rewrite the first sum on the RHS of \cref{eq:c2} as
\eq{
\sum_{x_{\oo'},x'_{\oo'}} K^{x'_{\oo'}}_{x_{\oo'}}(\oo'; t) 
	p_{x'}(t) \ln \left[\dfrac{K^{x'_{\oo'}}_{x_{\oo'}}(\oo';t)}{K^{x_{\oo'}}_{x'_{\oo'}}(\oo';t)p_{x_{\oo'}}(t)}\right]
		- \sum_{x,x'} K^{x'}_x(\oo';t) p_{x'}(t) \ln \left[\dfrac{p_x(t)} {p_{x_{\oo'}}(t)  }\right]
			&= \langle {\dot{\sigma}}^{\oo'}(t) \rangle + \dfrac{d^{\oo';K}}{dt}S^{X_\oo | X_{\oo'}}(t)
}
Moreover, by \cref{eq:a1}, even though $\oo \setminus \oo'$ need not be a community,
the second sum in \cref{eq:c2} can be rewritten as
\eq{
\sum_{i \not\in \oo'} \sum_{x,x'} K^{x'}_x(i;t) p_{x'}(t) \ln \left[\dfrac{K^{x'}_x(i;t)}{K^x_{x'}(i;t)p_x(t)}\right]
	&= \sum_{x,x'} K^{x'}_x(\oo \setminus \oo';t)p_{x'}(t) 
			\ln \left[\dfrac{K^{x'}_x(\oo \setminus \oo';t)}{K^x_{x'}(\oo\setminus\oo'; t)p_x(t)}\right] \\
	&= \langle {\dot{\sigma}}_{K(\oo \setminus \oo';t)}(t) \rangle
}

Combining completes the proof. In order to express that proof as in the main text,
with the implicit $\oo$ once again made explicit, use the fact that windowing $K(\oo;t)$ to $\oo' \subset \oo$ is the same as windowing $K(t)$ to $\oo'$.

\section{EP bounds from the inclusion-exclusion principle}
\label{app:in_ex_sums}

%\textit{EP bounds from the inclusion-exclusion principle.---}

For all $n > 1$, write $\NN^n$ for the multiset of all intersections of $n$ of the sets of subsystems
$\oo_i$:
\eq{
\NN^2 &= \{\oo_i \cap \oo_j : 1 \le i < j < |\NN^1|\} \\
\NN^3 &= \{\oo_i \cap \oo_j \cap \oo_k: 1 \le i < j < k < |\NN^1|\} 
}
and so on, up to $\NN^{|\NN^1|}$. Any community structure $\NN^1$ specifies an associated set of sets,
\eq{
\ovNN:= \NN^1 \cup \NN^2 \cup \ldots \cup \NN^{|\NN^1|}
}
Note that every element of $\ovNN$ is itself a community,
since intersections of communities are unions of communities. 

Given any function $f : \ovNN \rightarrow \R$, 
the associated \textbf{inclusion-exclusion sum} (or just ``in-ex sum'') is
\eq{
\GG^{f} &:= \sum_{\oo \in \NN^1} f(\oo) - \sum_{\oo \in \NN^2} f(\oo) + \sum_{\oo \in \NN^3}  f(\oo) - \ldots
}
In particular, given any distribution $p_x$, there is an associated real-valued function mapping
any $\oo \in \ovNN$ to the marginal entropy of (the subsystems in) $\oo$.
So using $S^{\ovNN}$ to indicate that function, 
\eq{
\GG^{S} &:=  \sum_{\oo \in \NN^1} S^{\oo} - \sum_{\oo \in \NN^2} S^{\oo} + \sum_{\oo \in \NN^3} S^{\oo}  - \ldots
\label{eq:5aa}
}
where $S^{\oo}$ is shorthand for $S^{X_\oo}$.
I refer to $\GG^{S} -S^{\NN}$ as the \textbf{in-ex information}.
As an example, if $\NN^1$ consists of two subsets, $\oo_1, \oo_2$, with no intersection, then the in-ex information is just
the mutual information $I(X_{\oo_1} ; X_{\oo_2})$. As another example, if $\NN^1$ consists of all singletons $i \in \NN$,
then the in-ex information is the multi-information of the $N$ separate random variables.

The global EP rate is the negative derivative of the in-ex information, plus the
in-ex sum of local EP rates:
\eq{
\label{eq:15a}
\dSt	&= 
\dfrac{d S^\NN(t)}{dt} + \langle\dot{Q}^\NN (t)\rangle  \\
	&= 
\dfrac{d}{dt} \left[ S^{\NN}(t) - \GG^{S(t)}\right] + \GG^{\langle\dot{\sigma}^\ovNN (t)\rangle}
\label{eq:11a}
}

\begin{proof}
To establish \cref{eq:15a}, first plug in to the result in \cref{app:proof_eq_11} and use the normalization of the rate matrices
%and then using the multipartite nature of the process,
to see that the EP rate of the full set of $N$ coupled subsystems is
\eq{
\dSt &= \sum_{x',x} K^{x'}_x(t) p_{x'}(t) \ln \left[\dfrac{K^{x'}_x(t) p_{x'}(t)}
				{K^{x}_{x'}(t) p_{x}(t)}\right]  \nonumber \\
	&= \sum_{i, x',x} K^{x'}_x(i; t) p_{x'}(t) \ln \left[\dfrac{K^{x'}_x(t) p_{x'}(t)}
				{K^{x}_{x'}(t) p_{x}(t)}\right]  \nonumber \\
	&= \sum_{i, x',x} K^{x'}_x(i; t) p_{x'}(t) \ln \left[\dfrac{K^{x'}_x(t)}
				{K^{x}_{x'}(t) p_{x}(t)}\right] 
\label{eq:A2}
}

Now introduce the shorthand
\eq{
G(A \subseteq \NN) &:= \sum_{i\in A,x',x} K^{x'}_x(i; t) p_{x'}(t) \ln \left[\dfrac{K^{x'}_x(t) }
				{K^{x}_{x'}(t) p_{x}(t)}\right]
}
Note that $\NN$ itself is a community; $G$ is an additive function over subsets of $\NN$; and 
$\dSt = G(\NN)$. Accordingly, we can
apply the inclusion-exclusion principle to \cref{eq:A2} for the set of subsets $\NN(t)$ to get
\eq{
\dSt	&= \sum_{\oo \in \NN^1(t)} G(\oo)  -  \sum_{\oo \in \NN^2(t)} G(\oo)  +  \sum_{\oo \in \NN^3(t)} G(\oo)  - \ldots \nonumber  \\
&= \sum_{\oo \in \NN^1(t)} \sum_{i  \in \oo} \sum_{x,x'} K^{x'}_x(i;t)  p_{x'}(t) \ln \left[\dfrac{K^{x'}_{x}(t)}
				{K^{x}_{x'}(t) p_{x}(t)}\right]  \nonumber \\
		& \qquad -  \sum_{\ooo \in \NN^2(t)} \sum_{i  \in \oo} \sum_{x,x'} K^{x'}_x(i;t)  p_{x'}(t)  \ln \left[\dfrac{K^{x'}_{x}(t)}
				{K^{x}_{x'}(t) p_{x}(t)}\right]   \nonumber  \\
		& \qquad +  \sum_{\ooo \in \NN^3(t)} \sum_{i  \in \oo} \sum_{x,x'} K^{x'}_x(i;t)  p_{x'}(t)  \ln \left[\dfrac{K^{x'}_{x}(t)}
				{K^{x}_{x'}(t) p_{x}(t)}\right] \nonumber \\
		&\qquad - \ldots \nonumber  \\
	&= \sum_{\oo \in \NN^1(t)}  \sum_{x,x'} K^{x'}_{x}(\oo;t)  p_{x'}(t) \ln \left[\dfrac{K^{x'}_{x}(t)}
				{K^{x}_{x'}(t) p_{x}(t)}\right]  -  \sum_{\oo \in \NN^2(t)}  \sum_{x,x'} K^{x'}_{x}(\oo;t)  p_{x'}(t) \ln \left[\dfrac{K^{x'}_{x}(t)}
				{K^{x}_{x'}(t) p_{x}(t)}\right]  + \ldots
\label{eq:a5}
}
Now use \cref{eq:a2} in \cref{lemma:a1} to rewrite \cref{eq:a5} as
%\dhwc{Next line is valid for dealing with the community EP rates --- but not for dealing with the entropy.}
\eq{
\dSt   	
	&= \sum_{\oo \in \NN^1(t)}  \sum_{x,x'} K^{x'_\oo}_{x_\oo}(\oo;t) 
	 \delta^{x'_{-\oo}}_{x_{-\oo}}p_{x'_\oo,x'_{-\oo}}(t) \ln \left[\dfrac{K^{x'_\oo}_{x_\oo}(\oo;t)}
				{K^{x_\oo}_{x'_\oo}(\oo;t)p_{x}(t)}\right]  -  \sum_{\oo \in \NN^2(t)}  \sum_{x,x'} K^{x'_\oo}_{x_\oo}(\oo;t)   \delta^{x'_{-\oo}}_{x_{-\oo}} p_{x'_\oo,x'_{-\oo}}(t) \ln \left[\dfrac{K^{x'_\oo}_{x_\oo}(\oo;t)}
				{K^{x_\oo}_{x'_\oo}(\oo;t) p_{x}(t)}\right]  + \ldots \nonumber \\
	&= \sum_{\oo \in \NN^1(t)}  \sum_{x_\oo,x_{-\oo},x'_\oo} K^{x'_\oo}_{x_\oo}(\oo;t)  p_{x'_\oo,x_{-\oo}}(t) \ln \left[\dfrac{K^{x'_\oo}_{x_\oo}(\oo;t)}
				{K^{x_\oo}_{x'_\oo}(\oo;t)p_{x_\oo,x_{-\oo}}(t)}\right] -  \sum_{\oo \in \NN^2(t)}  \sum_{x_\oo,x_{-\oo},x'_\oo} K^{x'_\oo}_{x_\oo}(\oo;t)  p_{x'_\oo,x_{-\oo}}(t) \ln \left[\dfrac{K^{x'_\oo}_{x_\oo}(\oo;t)}
				{K^{x_\oo}_{x'_\oo}(\oo;t)p_{x_\oo,x_{-\oo}}(t)}\right]  + \ldots
\label{eq:A4}
}
Next, use the same kind of reasoning that resulted in \cref{eq:A4} to show that the sum
\eq{
& \sum_{\oo \in \NN^1(t)}  \sum_{x_\oo,x_{-\oo},x'_\oo} K^{x'_\oo}_{x_\oo}(\oo;t)  p_{x'_\oo,x_{-\oo}}(t) \ln p_{x_\oo,x_{-\oo}}(t)
	 - \sum_{\oo \in \NN^2(t)}  \sum_{x_\oo,x_{-\oo},x'_\oo} 
				K^{x'_\oo}_{x_\oo}(\oo;t)  p_{x'_\oo,x_{-\oo}}(t) \ln p_{x_\oo,x_{-\oo}}(t) + \ldots 
}
can be written as $\sum_{x,x'} K^{x'}_{x}(t) p_{x'}(t) \ln p_x(t) = S^{\NN}(t)$.
We can use this to rewrite \cref{eq:A4} as
\eq{
\dSt	&=\dfrac{d}{dt} \left[ S^{X_\NN}(t) - \GG^{S(t)}\right] 
	 + \sum_{\oo \in \ovNN}  \sum_{x_\oo,x'_\oo} K^{x'_\oo}_{x_\oo}(\oo;t)  p_{x'_\oo}(t) \ln \left[\dfrac{K^{x'_\oo}_{x_\oo}(t)}
				{K^{x_\oo}_{x'_\oo}(t) p_{x_\oo}(t)}\right] 
	-  \sum_{\oo \in \NN^2(t)}  \sum_{x_\oo,x'_\oo} K^{x'_\oo}_{x_\oo}(\oo;t)  p_{x'_\oo}(t) \ln \left[\dfrac{K^{x'_\oo}_{x_\oo}(\oo;t)}
				{K^{x_\oo}_{x'_\oo}(t) p_{x_\oo}(t)}\right]  + \ldots \nonumber \\
	&=  \dfrac{d}{dt} \left[ S^{X_\NN}(t) - \GG^{S(t)}\right]  + \sum_{\oo \in \ovNN} \dSot -  \sum_{\oo \in \NN^2(t)} \dSot + \ldots
}

This establishes the claim. 

\end{proof}

If we use \cref{eq:15} to expand each local EP term in \cref{eq:11a} and then compare
to \cref{eq:15a}, we see that the global expected EF rate equals the in-ex sum of the local expected EF rates:
\eq{
\langle\dot{Q}^\NN (t)\rangle &= \GG^{\langle\dot{Q}^\ovNN (t)\rangle}
}
Note as well as that we can apply \cref{eq:11a} to itself, by using it to expand any of the local EP terms
$\sigma^\oo(t)$ that occur in the in-ex sum $\GG^{\langle\dot{\sigma}^\ovNN (t)\rangle}$ on its own RHS.

\cref{eq:11a} 
%is the second major result of this paper. It 
can be particularly useful when combined
with the fact that for any two communities $\oo, \oo' \subset \oo$, 
$\langle \dot{\sigma}^{\oo'}  \rangle \le \langle \dot{\sigma}^{\oo}  \rangle$ (see \cref{eq:30a}). To illustrate this, return to the scenario of \cref{fig:1}.
There are three communities in $\NN^1$ (namely, $\{1, 2, 3\}, \{3\}, \{3, 4\}$), three 
in $\NN^2$ (namely, three copies of $\{3\}$), and one in $\NN^3$ (namely, $\{3\}$). Therefore using obvious shorthand,
\eq{
\dSt	&= \dfrac{d S^{1,2,3,4}(t)}{dt} - \dfrac{d S^{1,2,3}(t)}{dt} - \dfrac{d S^{3,4}(t)}{dt} + \dfrac{d S^{3}(t)}{dt}  \nonumber \\
	&\qquad\qquad\qquad\qquad\qquad\qquad
			 + \langle \dot{\sigma}^{1,2,3}  \rangle + \langle \dot{\sigma}^{3,4}\rangle  - \langle \dot{\sigma}^{3}\rangle   \nonumber \\
	&\ge \dfrac{d S^{4 | 1,2,3}(t)}{dt} - \dfrac{d S^{4 | 3}(t)}{dt} + \langle \dot{\sigma}^{1,2,3}  \rangle
\label{eq:39}
}
(Note that in contrast to lower bounds involving windowed derivatives, none of the terms in \cref{eq:39} involve
counterfactual rate matrices.)
So if the entropy of subsystem $4$ conditioned on subsystems $1, 2$ and $3$ is growing, while
its entropy conditioned on only subsystem $3$ is shrinking, then the global EP rate is strictly positive.

As a final comment, it is worth noting that in contrast to multi-information, in some situations the in-ex information can be negative.
(In this it is just like some other  extensions of mutual information to more
than two variables~\cite{mcgill1954multivariate,ting1962amount}.)
As an example, suppose $N=6$, and label the subsystems as $\NN = \{12, 13, 14, 23, 24, 34\}$.
Then take $\NN^1$ to have four elements, $\{12, 13, 14\}, \{23, 24, 12\}, \{34, 13, 23\}$ and
$\{34, 24, 14\}$. (So the first element consists of all subsystems whose label involves a $1$,
the second consists of all subsystems whose label involves a $2$, etc.). 
Also suppose that with probability $1$, the state of every subsystem is the same. Then if the probability distribution of that identical 
state is $p$, the in-ex information is $-S(p) + 4S(p) - 6S(p) = -3S(p) \le 0$.

\end{widetext}

\end{document}